\let\epsilon\varepsilon
\let\phi\varphi
\let\epsilon\varepsilon
\newtheorem*{lemma*}{Lemma}
\newtheorem{theorem}{Theorem}
\newtheorem{claim}{Claim}
\newtheorem{definition}{Definition}
\begin{document}


\title{Error correction methods based on two-faced processes} 

\author{
 \IEEEauthorblockN{Boris Ryabko\\}
 \IEEEauthorblockA{Federal Research Center for Information and Computational Technologies,
\\Novosibirsk State University, 
and  \\Siberian State University of Telecommunications and Informatics, 
\\
Novosibirsk\\}}
\date{}

\maketitle


\begin{abstract}

A new approach to the problem of error correction in communication channels is proposed, in which the input sequence is transformed in such a way that the interdependence of symbols is significantly increased. Then, after the sequence is transmitted over the channel, this property is used for error correction so that the remaining error rate is significantly reduced. The complexity of encoding and decoding is linear.

\end{abstract}

\section{Introduction} 

Since the publication of Claude Shannon's famous article \cite{sh}, the problem of efficient data transmission over noisy channels has occupied a central place in information theory.  Hundreds of articles are published annually on various error correction methods, among which we have mentioned several books \cite{b1,bb}, where many references can be found. 

Most error correction codes studied in theory and used in practice are based on adding redundant symbols to the transmitted message.
This class of methods includes block codes \cite{b2,b3}, convolutional codes \cite{b4}, hash function-based codes \cite{b5}, and some others.

In this paper, we propose a new approach based on increasing the mutual dependence of message symbols using a special transformation and then using this dependence to correct errors.

\subsection{The main tool of the proposed method: two-faced processes}

In the main part of the article, we will consider an information sources that generate letters from the alphabet $\{0,1\}$ and, in particular, a Bernoulli source $B(p)$ that generates letters with probabilities $P(0) = p, P(1)= q \, (= 1-p), \,\, p \ge 1/2$.

The proposed code will be based on so-called two-faced  processes, which are Markov chains of order (or memory) $s, s \ge 1$. This means that for the generated sequence $... x_{-1} x_0 x_1 ... $, the probability $P(x_{i+1})$ depends on the $s$ preceding symbols, i.e., the probability $P(x_{i+1} = \alpha / ... x_{-1} x_0 x_1 ... x_i)$ is equal to $P(x_{i+1} = \alpha / x_{i-s+1} ... x_i)$, where $\alpha \in \{0,1\}$.
We will describe those chains  by the following transition probability matrices:
 \begin{equation}\label{ch1}
 s=1:    \,\,\,
\begin{array}{c|c|c|}
  probability\, of\, zero \,& p & q \\ 
  \hline 
  after  & 0&1\\
 \end{array} \, .
  \end{equation}
It means, that $Pr\{ x_{i+1} = 0/x_i= 0\} = p $,  $Pr\{ x_{i+1} = 0/x_i= 1\}  $ $= q $ and, of course,   
$Pr\{ x_{i+1} = 1/x_i= 0\} = q $,  $Pr\{ x_{i+1} = 1/x_i= 1\}  $ $= p. $  
It is obvious that $Pr\{ x_{i+1} = 1/...\} = 1 - Pr\{ x_{i+1} =0 /...\} $, and these probabilities are not contained in the matrices.

 \begin{equation}\label{ch2}    For   \,\,
 s=2:   \,\,\,
\begin{array}{c|c|c|c|c|}
  probability\, of\, zero \,& p & q&q&p \\ 
  \hline 
  after  & 00&01&10&11\\
 \end{array} \, .
 \end{equation} 
It means, that $Pr\{ x_{i+1} = 0/x_{i-1}x_i= 00\} = p $,  etc. 

The "typical" sequences for (\ref{ch1}) and  (\ref{ch2}) for large $p$ (say, $0.9$) can be as follows:
 \begin{equation}\label{tt1}
x_1x_2x_3x_4x_5x_6x_7x_8x_9x_{10}  \, = \,
0\,  \, 0\, \, 0\,  \,  0\, 1 \, \, 1 \, 1\, \, 1 \, 1 \,  1 \,\, ,
\end{equation}
 \begin{equation}\label{tt2}  \,\,
x_1x_2x_3x_4x_5x_6x_7x_8x_9x_{10}  \, = \,
0\,  \, 0\, \, 0\,  \,  0 \, 1\, 1 \, 0\,   \, 1\, 1 \, \, 0 \, ,  \, .
\end{equation}(In both cases, a rare event, i.e., with probability $q$, occurred after $x_4$.)

\subsection{ Intuitive description of the ideas underlying the proposed method }

 In this article, we will consider transmission over two binary channels that are very popular in information theory: the erasure channel (EC) and the binary symmetric channel (BSC). For the first channel, any transmitted letter from $\{0,1\}$ is lost, i.e., the receiver receives a special symbol $*$ (instead of the transmitted 0 or 1) with a certain probability $\pi$. For BSC, any letter from the sender can be replaced by another with a certain probability $\pi$, see \cite{co}.

Now let us assume that a binary sequence is transmitted via EC with probability $\pi$, 
that is,  any letter is converted to $*$ during transmission with probability $\pi$.  

 Let us consider three cases: i) the transmitted sequence is generated by a Bernoulli source $B(p), p>1/2$. Obviously, the best strategy is to replace $*$ with $0$, since the probability of 0 is equal to $p>1/2$. Thus, $Pr(error) = q$. In the future, we will use the same strategy for all cases, that is, we will choose the letter with the higher probability and, so,   $Pr(erroe)$ will be equal the smaller probability.

ii) Consider the case where the sequence generated by a two-faced process is transmitted over the   EC.   Suppose that the memory of the process is equal to 1, and the transmitted sequence (\ref{tt1}) is accepted as 
\begin{equation}\label{t1*}   \,
0\,  \, 0\, \, *\,  \,  0\, 1 \, \, 1 \, 1\, \, * \, 1 \,  1 \, . 
\end{equation}
(That is, the third and eighth letters are erased).  Then a simple calculation shows that 
$P(x_3=0/$ $( \ref{t1*} )  ) $ $= P(x_3=0/x_2 =0  \& $( \ref{t1*} ) ) ) $  $ \& $P(x_4=0/x_3 =0 $  $ \& ( \ref{t1*} ) ) ) $
$= p^2/(p^2 + q^2)$. Thus, in this case,  $Pr(error) =$ $P(x_3=1/$ $( \ref{t1*} ) ) $ $= q^2/(p^2 + q^2)$.  The same is true for the eighth letter.

iii) Now suppose that the memory of two-faced process is 2 and sequence (\ref{tt2}) was transmitted via the same EC and the following was received: \begin{equation}\label{t2*}  \,\,
0\,  \, 0\, \, *\,  \,  0 \, 1\, 1 \, 0\,   \, * \, 1 \, \, 0 \, \, .
\end{equation}
The direct calculation of probabilities shows that 
$P(x_3=0/$ $( \ref{t2*} ) = $ 
$ p^3/(p^3 + q^3)$. Thus, in this case,  $Pr(error) $  $ = q^3/(p^3 + q^3)$.  
The same is true for the eighth letter.

So, from these simple examples, we see that the probability of error is $q$ for a Bernoulli process and $q^2/(p^2 + q^2)$, $ q^3/(p^3 + q^3)$ for two-sided processes with memory one and two, respectively. Thus, in this example, we see that increasing the memory of the process reduces the decoding error (in these examples, we do not consider all possible channel error positions, but later we will consider general methods and give general error estimates). Thus, in a sense, increasing the memory  corrects most of the channel errors.

The second ingredient of the proposed error correction method is the availability of a method for transforming the Bernoulli process (and any other random process) into a two-faced process with the necessary memory.

In summary, the proposed method is a two-step procedure: first, the sequence to be transmitted is converted into a two-faced process and transmitted over a noisy channel. Then, the received data undergoes data correction in accordance with the error probability estimates, and the resulting sequence is converted from two-faced back to Bernoulli. 

It turns out that there is a range of parameters where this method significantly reduces the error rate for EC and BSC. It should be noted that the complexity of encoding and decoding this method is linear.

The rest of the article is structured as follows. The next section contains background information on the two-faces processes. The following section describes the proposed method and examines its properties. The third section describes the results of the experiments, and the final section presents a brief conclusion.

\section{Two-faced processes and transformations}

\subsection{two-faced processes}

Let us  define matrices $T_1 =$ $\bigl(\begin{smallmatrix}p&q \\ 0&1\end{smallmatrix} \bigr)$,  $\overline{T}_1 =$ $\bigl(\begin{smallmatrix}q&p \\ 0&1\end{smallmatrix} \bigr)$,  $T_2= T_1\parallel \overline{T}_1= $
$\bigl(\begin{smallmatrix}p&q&q&p \\ 00&01&10&11\end{smallmatrix} \bigr),$
$\overline{T}_2 =$ $\bigl(\begin{smallmatrix}q&p&p&q \\ 00&01&10&11\end{smallmatrix} \bigr),$
$T_3= T_2\parallel \overline{T}_2$,    $...  $  $T_{n+1}= T_n\parallel \overline{T}_n$,  and so on. 
For example, 
$$\overline{T}_3 = \bigl(\begin{smallmatrix}q&p&p&q  & p&q&q&p \\ 000&001&010&011 &100&101&110&111 \end{smallmatrix} \bigr),$$
where, as before,  $Pr\{ x_{i+1} = 0/ x_{i-2}x_{i-1}x_i= 000\} = q $,  etc. 

More precisely, $\overline{T}_k$ is obtained from $T_k$ by changing $p$ and $q$ in 
the top row of $T_k$ and the upper row $T_{k+1}$ is a concatenation of the upper rows $T_k$ and $\overline{T}_k$, while the lower row is a sequence of lexicographically ordered words from $\{0,1\}^{k+1}$.

	These processes were proposed by B. Ryabko \cite{rm} and it was proven that for a process with memory $s$, for any $r \le s$ and words $g$ of length $r$, $Pr\{x_i... x_r = g\} = 2^{-r}$, where the Shannon entropy per letter $h_\infty $ is equal to $p \log_2 p + q \log_2 q $, i.e., $h_\infty $ can be very small (for large $p$). (This is why such processes are called two-faced. If we look at the frequencies of words of length $r, r \le s$, the process appears to be uniformly distributed (i.e., “completely random”), but if we look at the frequencies of long words (those longer than $s$), this distribution is very far from uniform.

\subsection{two-faced transformations}
First, we define two maps  $\phi_{l,w} :$   $\{0,1\}^n$   $\to$  $\{0,1\}^n$  and  $\psi_{w,l} :$    $\{0,1\}^n$   $\to$  $\{0,1\}^n$
where integers $n,l,  (l \le n)$ and a word $w \in \{0,1\}^l$ are parameters. 
 Let there be given  words $w =  w_1 ... , w_l, \in  \{0,1\}^l$, $x_1 ...  x_n \in  \{0,1\}^n$  

The transformation $\phi_{l,w}$: $x \to v$,  is define by equations $v_j= w_{l+j}$ for $j= -l+1, ..., 0$, 
\begin{equation}\label{fi} 
v_{i} =
  \begin{cases} x_{i}  \quad 
   \text{if } \,  \Sigma_{j=i-l}^{i-1}  \,v_j \,\,\text{is\, \, even} 
\\
   \overline{x}_{i}  \quad 
   \text{if } \,  \Sigma_{j=i-l}^{i-1}  \, v_j \,\, \text{is\, odd} \, ,
\end{cases}
\end{equation}
for $i=1, ..., n$ (here and below $\overline{0} = 1,  $ $  \overline{1}=0$).
For some $n$-bit word $v$ the  transformation $\psi_{l,w}$: $v \to u$ is defined by 
\begin{equation}\label{psi} 
u_{i+1} = \begin{cases} v_{i+1}  \quad \text{if } \,  \Sigma_{j=i-l+1}^i  \,v_j \,\, \text{is\, even}    \\   \overline{v}_{i+1}  \quad  \text{if } \,  \Sigma_{j=i-l+1}^i  \, v_j \,\, \text{is\, odd} \, ,\end{cases}\end{equation}for $i=0, ..., n-1.$
For example,  for sequence $x=0000100000$ , $\phi_{1,0}(x) $ equals  $0000111111$, 
$\phi_{2,00}(x) $  equals $0000110110$,  $\psi_{1,0}(0000111111)  = 0000100000$,  
$\psi_{2,00}(0000110110) = 0000100000$.

Note, that $\psi_{l,w}  (\phi_{l,w} (x) ) = x$.

The following definition  plays an important role in the described  codes.  
  \begin{definition}     Let there be  a sequence   $v= v_1 ... v_n$,  some integer $l$ and the word $w \in \{0,1\}^l$. 
 The value $\nu^{i,w}_\tau(v)$ is the number of occurrences of the letter $\tau$ in the word $ \psi_{l,w}(v)$,  where $\tau$ $\in \{0,1\}$.
\end{definition}

The following theorem from \cite{rm}   is true for the transformation  $\phi_{l,w}(x)$:
\begin{theorem} 
Let $l > 0$ be an integer, $w$ be a uniformly distributed binary word of $l$ letters, and let $x=x_1...x_n$ be generated randomly according to a Bernoulli process with parameters $p \in (0,1)$.
Then the sequence $\phi_{l,u}(x)$ obeys distribution 
of   two-faced process  $T_l$ of order $l$ (with matrix $T_l$ ) and probability $p$.\end{theorem}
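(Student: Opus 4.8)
The plan is to show directly that the transformed sequence $v = \phi_{l,w}(x)$ has the Markov structure of $T_l$ with the correct transition probabilities. First I would establish the key observation that, by the definition in~(\ref{fi}), the parity of the window $\Sigma_{j=i-l}^{i-1} v_j$ acts as a "state" that determines how $x_i$ is flipped into $v_i$. Concretely, I would argue that $v_i \oplus x_i$ is a deterministic function of $(v_{i-l}, \dots, v_{i-1})$ — namely the parity of that block — and conversely, given that block of previous $v$'s, $x_i$ is recovered from $v_i$ by the same flip. Since the $x_i$ are i.i.d.\ Bernoulli($p$) and independent of the initial segment $v_{-l+1},\dots,v_0$ (which is determined by $w$), one gets that conditioned on any realization of $v_{i-l},\dots,v_{i-1}$, the value $v_i$ equals $0$ with probability $p$ if the parity is even and with probability $q$ if the parity is odd (because then $v_i = \overline{x}_i$, which is $0$ with probability $q$).

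The second step is to verify that this conditional law is \emph{exactly} the transition rule encoded by the matrix $T_l$. Here I would recall the recursive construction $T_{k+1} = T_k \parallel \overline{T}_k$ and prove by induction on $l$ that the entry of $T_l$ indexed by a length-$l$ word $b = b_1\dots b_l$ is $p$ when $b_1 \oplus \cdots \oplus b_l = 0$ and $q$ when the parity is $1$. The base case $l=1$ is immediate from~(\ref{ch1}); for the inductive step, a word of length $l+1$ either starts with $0$ (its $T_{l+1}$-entry comes from the $T_l$ block and its parity equals the parity of its last $l$ bits) or starts with $1$ (its entry comes from the $\overline{T}_l$ block, which flips $p \leftrightarrow q$, matching the parity flip caused by the leading $1$). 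This identifies the "parity-of-window" rule with the matrix $T_l$.

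Combining the two steps: the process $v$ is Markov of order $l$ because $v_i$ depends on the past only through $(v_{i-l},\dots,v_{i-1})$, and its transition probabilities coincide with those of $T_l$. To finish I would also note that the claim of the earlier displayed result — that blocks of length $r \le l$ are uniformly distributed and the per-letter entropy is $p\log_2 p + q\log_2 q$ — is already part of the cited statement about $T_l$, so nothing extra is needed beyond matching the transition matrix; alternatively one can see directly that because $w$ is uniform, $v_{-l+1}\dots v_0$ is uniform, and each new $v_i$ is obtained by XORing $x_i$ (or $\overline{x}_i$) against a function of the past, which keeps short windows uniform.

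The main obstacle I anticipate is purely bookkeeping: being careful with the index shift between the "initialization" equations $v_j = w_{l+j}$ for $j \le 0$ and the running recursion for $i \ge 1$, and making sure the independence argument (that $x_i$ is independent of the already-constructed prefix $v_{i-l}\dots v_{i-1}$) is stated cleanly — this holds because that prefix is a function of $w$ and $x_1,\dots,x_{i-1}$ only, and $x_i$ is independent of all of those. Once that is set up, the induction identifying the parity rule with $T_l$ is routine, and the Markov property is immediate from the form of~(\ref{fi}).
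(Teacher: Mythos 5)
The paper itself does not prove this theorem: it is quoted from reference \cite{rm}, so there is no in-paper argument to compare against. Your proposal is a correct, self-contained proof and its two pillars are sound: (a) by (\ref{fi}), $v_i = x_i$ or $\overline{x}_i$ according to the parity of $v_{i-l}\dots v_{i-1}$, and since that window is a function of $w$ and $x_1,\dots,x_{i-1}$ only, $x_i$ is independent of the whole constructed past, giving $P(v_i=0\mid v_{-l+1}\dots v_{i-1}) = p$ or $q$ depending only on the parity of the last $l$ symbols — this is exactly the order-$l$ Markov property; (b) the induction on the recursion $T_{k+1}=T_k\parallel\overline{T}_k$ correctly identifies the top-row entry of $T_l$ at a word $b_1\dots b_l$ as $p$ when $b_1\oplus\cdots\oplus b_l=0$ and $q$ otherwise (the leading bit selects the $T_l$ or $\overline{T}_l$ block, and the $p\leftrightarrow q$ swap matches the parity flip). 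The one step you leave as a sketch is the initial-distribution part: matching the transition kernel alone does not pin down the law of $v_1\dots v_n$, so you should state explicitly that the two-faced process has uniform distribution on length-$l$ windows and that your chain starts in this distribution because $w$ is uniform; the needed observation is one line — if the current window is uniform, then for any target window $b_1\dots b_l$ the two possible predecessors $0b_1\dots b_{l-1}$ and $1b_1\dots b_{l-1}$ have opposite parities, so their transition probabilities to $b_l$ sum to $p+q=1$ and the next window is again uniform. With that made explicit, your proof is complete and has the advantage of keeping the paper self-contained rather than deferring to \cite{rm}.
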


\section{
The code}
Here we describe the codes with error correction for EC and BSC.
In all cases, we will consider the following scheme. There is an integer $l$ and a word $w \in \{0,1\}^l$, formed according to a uniform distribution, where this word $w$ and   integer $l$ are known to the sender and receiver.
In addition, there is a word $x=x_1 ... x_n$ generated by a Bernoulli source $B(p)$, and this word must be transmitted over a noisy channel. Of course, this word is unknown to the receiver.

Let us denote $\phi_{l,w} (x) = v_1 ... v_n $, 
Define the word $v= w_1 ... w_l  v_1 ... v_n$, that is, $v$ is a word of $(n+l)$ letters, and $ v_{1-l} ... v_0 = w_1 ... w_l$,  
The word $v_1 ... v_n$  is transmitted over a noisy channel, and the receiver receives a distorted word  $v^*_1, ... , v^*_n$.  
Note that in the case of EC,  $v^*_i \in \{0,1,*\}. $      
   It is convenient to define $v^*_{-l+1} ... v^*_0$ as
$ w_1 ... w_l$. (This is possible because the word $w$ is known to the receiver.) Thus, $v^*_{-l+1} ... v^*_0 =$ $v_{-l+1} ... v_0$.
  
Decoding is a two-step procedure: first, the sequence $v^* = v^*_1 v^*_2 ... v^*_n$ is processed into the sequence $v' = v'_1 v'_2 ... v'_n$ in the hope of eliminating some errors (but, at the same time, possibly adding new ones). Then the mapping $\psi_{l,w}(v')$ is applied, and the resulting sequence $x' = x'_1 ... x'_n$ 
is the result of decoding.
Here, the transformation $ v^*_1 v^*_2 ... v^*_n \to $ $v'_1 v'_2 ... v'_n$ plays a key role and will be described separately for EC and BSC.

Another important quantity is the bit error rate (BER), defined as the average frequency of symbols for which $x'_i \ne x_i, i = 1, ... , n$. We denote it as $BER(Code)$, where $Code$ is a specific coding method.

\subsection{Erasure channel}
Suppose that the receiver has received the word $v^* = v^*_1 v^*_2 ... v^*_n$, where all letters belong to the alphabet $\{0,1,*\}$.
The decoder must replace all symbols $*$ with letters from the alphabet $\{0,1\}$ and obtain the word $v' = v'_1 v'_2 ... v'_n$, trying to reduce the number of errors, i.e., minimize the number of letters $v'_1$ for which $v'_i \ne v_i$.  
The method is as follows: for $i=1, ... , n$

i) if the letters $v^*_i \ne *$ , then  
$v'_i = v^*_i$.

ii) $ \,$ if  $v^*_i  =  *$ and  for all letters  $v^*_j \ne *  \,$,    where $ j = i-l +2, ... , i-1, i+1,  ... , i+l+1$,                     
we  define 
$$ \nu' = {{\nu_0^{l, v^*_{i-l} ...  v^*_{i-1} }}} (\psi_{l,   v^*_{i-l} ...  v^*_{i-1} } (1 \, v^*_{i+1}  ... v^*_{i+l} ) )    \, ,
$$
\begin{equation}\label{main} 
v' _{i} = 
  \begin{cases} 1  \quad 
   \text{if } \,  p^{\nu'} (1-p)^{l+1- \nu'}  \ge  (1-p)^{\nu'} p^{l+1- \nu'}  
\\
   0  \quad 
   \text{otherwise} 
\end{cases}
\end{equation}

iii) $ \,$ if  $v^*_i  =  *$ and 
either 
at list for one $j$, $v'_j = *$,  $j \in   \{i-l +2, ... , i-1, i+1,  ... , i+l+1 \} ,$
or $i> n-l$, then
\begin{equation}\label{1/2} 
v' _{i} =
  \begin{cases}   \quad  0 \quad
   \text{with }\, \text{probability} \,  \text{ 1/2}
\\
\quad  1 \quad
   \text{with }\, \text{probability} \,  \text{ 1/2}   
\end{cases} .
\end{equation}
We denote this method as $Code_{EC}$.

Let us consider an example. 
Suppose, 
$ x = 0000100000, \,  Pr(0) = p > 1/2,  \,  l=1,\,  w=0 .
$
Then $$\phi_{1,0}(x) = 0000111111 \,  = v_1 ... v_{10}, \, \, v = v_0v_1 ... v_{10} = 00000111111\,.
$$
Let the word $v_1 ... v_{10} = 0000111111$ be transmitted, and the word 
$v^*_1 ... v^*_{10} = 0*00011*11$ be received.   By definition, $v^* = v^*_0 v^*_1 ... v^*_{10} = 00*00011*11$.
Now we decode $v^*_2=*$ and $v^*_7=*$ according to (\ref{main}),  that is, correct errors. So,  for $v^*_2$ 
we obtain
$$ \nu' = \nu_0^{1,0} (\psi_{1,0}(10)) =  \nu_0^{1,0} (11) = 0\, .
$$
Clearly,   $p^0 (1-p)^{2-0} < (1-p)^0 p^2$ (for $p >1/2$), therefor $v'_2 = 0$ (see (\ref{main})).
Analogically, for $v^*_7=*$, 
$$  \nu' = \nu_0^{1,1} (\psi_{1,1}(11)) =  \nu_0^{1,0} (00) = 2\, 
$$
and $p^2 (1-p)^0 > (1-p)^2 p^0$, hence, $v'_7 =1$. 
So, both errors are corrected and $v'_1 ... v'_{10} = 0000111111 $ $ (= v_1 ... v_{10} )$.  Therefor,  
$\psi_{1,0} (v'_1 ... v'_{10}) = 0000100000,  $ that is, equals an initial sequence.

The second example is for the memory $l=2.$
Let 
$$x = 0000100000,  \, w=00,  \phi_{2,00} = v_1 ... v_{10} = 0000110110, \,  v = v_{-1}v_0 v_1 ... v_{10} = 
00\,  0000110110 \, .$$
Suppose that the word $v_ 1 ... v_{10}$ is transmitted via EC, and the word $000*110*10$ is received.
For $v^*_4$, we obtain from  (\ref{main}) the following.
$$ \nu' = \nu_0^{2,00} (\psi_{2,00}(111) ) = \nu_0^{2,00} (101) = 1\, ; \,\, p^1(1-p)^2 < p^2(1-p)^1 \,
$$ 
So, $v'_4 = 0$. 
Analogically, for $v^*_8=*$.
$$ \nu' = \nu_0^{2,10} (\psi_{2,10}(110) ) = \nu_0^{2,10} (000) = 3\, ; \,\, p^3 >(1-p)^3 \,
$$ 
So, $v'_8 = 1$ and both errors are corrected.

Let us describe properties of  the described code.

\begin{theorem} 
Let there be a Bernoulli source with probability 0 equal to $p, p> 1/2$, generating a sequence of letters $x_1 ... x_n$, which is transmitted via EC with an error probability of $\pi$.  If the $Code_{EC}$ code with parameter $l$ is used to correct errors, then
the bit error rate can be estimated as follows:
\begin{multline}\label{t1} 
BER(Code_{EC}) \le    \pi   \Bigg( \Big(
(1-l/n) (1-\pi)^{2l+1}  \sum_{i=\lfloor (l+1)/2 \rfloor }^l   (^{l+1}_{\,\,i}) p^i (1-p)^{l+1-i}  \Big)  
 \\
+ \Big(  ( (1 - (1- \pi)^{2l+1} ) + l/n) /2   \Big)\Bigg) (l+1).
\end{multline}
\end{theorem}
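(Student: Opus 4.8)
The plan is to push the problem from the decoded word $x'$ down to the transmitted two‑faced word $v$, where the rule (\ref{main}) is transparent, and then lift the estimate back.

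\textbf{Step 1: reduction to a bound on $\Pr(v'_i\ne v_i)$.} Both recursions (\ref{fi}) and (\ref{psi}) give the XOR identity $\psi_{l,w}(v)_i=\bigoplus_{k=0}^{l}v_{i-k}$, with the convention $v_{1-l}\cdots v_0=w$. Applying it to $x=\psi_{l,w}(v)$ and $x'=\psi_{l,w}(v')$ (which share the prefix $w$) yields $x'_i\oplus x_i=\bigoplus_{k=0}^{l}(v'_{i-k}\oplus v_{i-k})$, so $x'_i\ne x_i$ forces $v'_j\ne v_j$ for some $j\in\{i-l,\dots,i\}$. Hence the number of indices with $x'_i\ne x_i$ is at most $(l+1)$ times the number with $v'_j\ne v_j$, whence $BER(Code_{EC})\le (l+1)\,n^{-1}\sum_{i=1}^{n}\Pr(v'_i\ne v_i)$, and it suffices to bound each $\Pr(v'_i\ne v_i)$.

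\textbf{Step 2: splitting a single term.} Rule (i) never errs, so $\Pr(v'_i\ne v_i)=\pi\,\Pr(v'_i\ne v_i\mid v^*_i=*)$. Conditioned on $v^*_i=*$ we are in case (ii) (``good at $i$'') or case (iii) (``bad at $i$''); in case (iii) the decoder outputs a fair coin, giving conditional error $1/2$, and every $i>n-l$ is always in case (iii). The good case (ii) demands that a fixed set of at most $2l+1$ neighbouring coordinates be un‑erased, an event that, by independence of the erasures and its independence of $\{v^*_i=*\}$, has probability $g_i\ge(1-\pi)^{2l+1}$. Writing $\alpha$ for the good‑case conditional error, for $i\le n-l$ we have $\Pr(v'_i\ne v_i\mid v^*_i=*)=g_i\alpha+(1-g_i)\tfrac12$.

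\textbf{Step 3: the good-case error $\alpha$ (the crux).} In case (ii) the decoder knows the true $v_{i-l},\dots,v_{i-1},v_{i+1},\dots,v_{i+l}$; by the XOR identity the two words compatible with them — one with $v_i=0$, one with $v_i=1$ — differ exactly by complementing the block $(x_i,\dots,x_{i+l})$, and (\ref{main}) keeps the block having at least $\lceil(l+1)/2\rceil$ zeros, the more likely one since $p>1/2$. This block is i.i.d.\ Bernoulli$(p)$, and, crucially, it is independent of the true $v_i$: indeed $v_i=x_i\oplus c_i$ where $c_i$ (the parity of $v_{i-l}\cdots v_{i-1}$) is a function of $w,x_1,\dots,x_{i-1}$, hence independent of $x_i,\dots,x_{i+l}$, and $c_i$ is uniform on $\{0,1\}$ because (by Theorem~1) $v$ is two‑faced of order $l$, so $\Pr(v_i=0)=\tfrac12$ while $p\ne q$. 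With $Z\sim\mathrm{Bin}(l+1,p)$ the number of zeros in $(x_i,\dots,x_{i+l})$, a short case analysis of the threshold rule gives $\alpha=\tfrac12\bigl(\Pr(Z\le\lfloor l/2\rfloor)+\Pr(Z\le\lceil l/2\rceil)\bigr)\le\tfrac12$. It then remains to prove $\alpha\le\sum_{i=\lfloor(l+1)/2\rfloor}^{l}\binom{l+1}{i}p^{i}q^{l+1-i}=:P_{\mathrm{good}}$; I would do this by pairing each small‑index term $\binom{l+1}{k}p^{k}q^{l+1-k}$ (with $k<l/2$) against $\binom{l+1}{l-k}p^{l-k}q^{k+1}$, which is strictly larger because $\binom{l+1}{k}/\binom{l+1}{k+1}=(k+1)/(l+1-k)<1$ and $q/p<1$, and by absorbing the single leftover middle term (weighted $\tfrac12$, present only when $l$ is odd) through comparison with $\binom{l+1}{0}q^{l+1}$. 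This binomial bookkeeping, with its $l$‑even/$l$‑odd cases and boundary indices, is the fiddliest part and the likeliest spot for an off‑by‑one error.

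\textbf{Step 4: assembly.} For $i\le n-l$, since $\alpha\le\tfrac12$ and $g_i\ge(1-\pi)^{2l+1}$, $g_i\alpha+(1-g_i)\tfrac12=\tfrac12-g_i(\tfrac12-\alpha)\le\tfrac12-(1-\pi)^{2l+1}(\tfrac12-\alpha)=(1-\pi)^{2l+1}\alpha+(1-(1-\pi)^{2l+1})\tfrac12\le (1-\pi)^{2l+1}P_{\mathrm{good}}+(1-(1-\pi)^{2l+1})\tfrac12$; for $i>n-l$ this conditional error equals $\tfrac12$. Averaging over $i$ (a fraction $1-l/n$ of the first kind, $l/n$ of the second), multiplying by $\pi$, then weakening $(1-l/n)$ to $1$ in the term $(1-l/n)(1-(1-\pi)^{2l+1})\tfrac12$, and finally multiplying by $(l+1)$ from Step~1, reproduces precisely the right‑hand side of (\ref{t1}).
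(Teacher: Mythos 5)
Your proof is correct and is essentially the paper's own (very terse) argument carried out in full: bound $\Pr(v'_i\ne v_i)$, split according to the decoding cases i)--iii) (contributing $0$, the binomial sum weighted by $(1-\pi)^{2l+1}$, and the fair-coin $1/2$ together with the $l/n$ boundary fraction), and multiply by $l+1$ for error propagation through $\psi_{l,w}$, which is exactly the decomposition the paper asserts without detail. The Step~3 inequality you flag, $\alpha\le\sum_{i=\lfloor (l+1)/2\rfloor}^{l}\binom{l+1}{i}p^{i}(1-p)^{l+1-i}$, is indeed true, but the clean bookkeeping pairs $k\leftrightarrow l+1-k$ for $k\ge 1$ (so the first term $i=\lfloor(l+1)/2\rfloor$ of the sum is left unused and absorbs $q^{l+1}$ together with the half-weighted middle term when $l$ is odd, via $q^{l+1}\le\tfrac12\binom{l+1}{(l+1)/2}p^{(l+1)/2}q^{(l+1)/2}$), rather than $k\leftrightarrow l-k$ as you wrote, which consumes that middle term and leaves insufficient slack when $p$ is close to $1/2$.
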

\begin{proof}
We estimate the probability that a randomly selected value $v'_i \ne v_i$, and then obtain an estimate of $BER$ by multiplying this probability by $l+1$.
Both parts in (\ref{t1}) correspond to options ii) and iii) in the decoding description. More precisely, the first sum corresponds to (\ref{main}), and the next part corresponds to (\ref{1/2}). (Obviously, the first option i) gives $(1-\pi) 0 = 0$.)
\end{proof}
This theorem gives a possibility to find $l$ which minimize (\ref{t1}).

Consider the asymptotic behavior of the estimate (\ref{t1}) for a small channel error probability $\pi$, i.e., $\pi \to 0$. Let $l = (1/\pi)^\gamma,  \gamma \in (0,1),$ and $n >> 1/\pi$. We will use Hoeffding's inequality:
\begin{claim} (Hoeffding's inequality \cite{hoe}). Let the sequence $x_1 ... x_m$ be generated by a Bernoulli source $B(p), p > 1/2$.
Then 
$$\sum_{i=k}^m (^m_i)p^i (1-p)^{m-i} \le
 \exp (- 2m (k/m -p )^2 )      ,  $$
for $k> mp$.
\end{claim}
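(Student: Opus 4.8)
The plan is to recognise the left-hand side as the upper tail of a binomial distribution and then reproduce the classical Chernoff-bound derivation of Hoeffding's inequality. First I would set $Y_j = 1$ if $x_j = 0$ and $Y_j = 0$ otherwise, so that $Y_1, \dots, Y_m$ are i.i.d.\ Bernoulli with $\E Y_j = p$; writing $S_m = Y_1 + \dots + Y_m$, one has $\binom{m}{i} p^i (1-p)^{m-i} = \Pr(S_m = i)$, and hence the quantity to be bounded is exactly $\Pr(S_m \ge k)$. The hypothesis $k > mp = \E S_m$ places us genuinely in the upper tail, so the target $\exp(-2m(k/m - p)^2)$ is a nontrivial bound.

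Next I would fix a parameter $t > 0$, apply Markov's inequality to $e^{t S_m}$, and factor the moment generating function using independence:
$$\Pr(S_m \ge k) \le e^{-tk}\, \E e^{t S_m} = e^{-tk}\prod_{j=1}^m \E e^{t Y_j}.$$
Each centred variable $Y_j - p$ has mean zero and is supported in an interval of length $1$, so Hoeffding's lemma gives $\E e^{t(Y_j - p)} \le e^{t^2/8}$, and therefore $\E e^{t Y_j} \le e^{tp + t^2/8}$. Substituting yields $\Pr(S_m \ge k) \le \exp\!\big( -t(k - mp) + m t^2 / 8 \big)$.

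Finally I would optimise the exponent over $t$: the quadratic $-t(k - mp) + m t^2 / 8$ is minimised at $t^\star = 4(k - mp)/m$, positive precisely because $k > mp$, with minimum value $-2(k - mp)^2 / m = -2 m (k/m - p)^2$; this gives $\Pr(S_m \ge k) \le \exp(-2m(k/m - p)^2)$, as claimed. The only genuinely nontrivial ingredient, and the step where I expect the real work to lie if a self-contained argument is wanted, is Hoeffding's lemma itself, namely $\E e^{t(Y_j - p)} \le e^{t^2/8}$ for a mean-zero variable bounded in an interval of length one: the standard route uses convexity of $s \mapsto e^{ts}$ to dominate it by the chord through the interval's endpoints, rewrites the resulting expectation as $e^{\Lambda(t)}$ for an explicit cumulant-type function $\Lambda$, and then verifies $\Lambda(t) \le t^2/8$ from $\Lambda(0) = \Lambda'(0) = 0$ together with the bound $\Lambda''(t) \le 1/4$. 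Since the statement is quoted from \cite{hoe}, one may instead simply invoke it and omit this computation; everything else above is routine.
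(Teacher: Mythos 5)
Your derivation is correct: identifying the sum as the upper tail $\Pr(S_m \ge k)$ of a Binomial$(m,p)$ variable, applying the exponential Markov bound, invoking Hoeffding's lemma $\E e^{t(Y_j-p)} \le e^{t^2/8}$, and optimising at $t^\star = 4(k-mp)/m$ (positive exactly because $k > mp$) gives precisely $\exp(-2m(k/m-p)^2)$. The paper itself offers no argument at all for this claim --- it is stated as a quotation of Hoeffding's inequality with a reference to \cite{hoe} --- so there is nothing in the text to compare against step by step; your closing remark that one may simply invoke the cited result is exactly what the paper does, and your self-contained Chernoff-bound derivation is the standard proof of that cited result, so it is a legitimate (indeed more informative) substitute. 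The only point worth flagging is cosmetic: the claim's hypothesis $k > mp$ is what makes the optimal $t^\star$ strictly positive, which you correctly noted, and the bound remains valid (trivially, with exponent $0$) at $k = mp$, so no additional care is needed at the boundary.
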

For $n >1/\pi$,   this claim and known inequality  $ (1- \epsilon)^r \le 1 - r \epsilon $ and $l = (1/\pi)^\gamma$,  
                from (\ref{t1}), we obtain that $BER(Code_{EC}) \le \pi  
(\exp( - const\,\, l) )+ (2l+1) \pi , $ $const = 2 (p-1/2)^2 $.
This inequality shows the effect of the difference $p-1/2$ on the error.  Furthermore, if $l= 1/ \pi^\gamma, \, \gamma \in (0,1)$, then
$= 
 \pi \, \,  o(1) $.

\subsection{BSC channel }
Now we describe a code for BSC channel ($Code_{BSC} )$.
Let, as before, 
$v^* = v^*_1 v^*_2 ... v^*_n$ is obtaind by the receiver  and then processed into the sequence $v' = v'_1 v'_2 ... v'_n$. 
Let us define 
 \begin{equation}\label{del} 
\Delta = \min     \{   s: (1-\pi) p^s (1-p)^{l+1-s} \ge \pi (1-p)^s p^{l+1-s}  \, \,  \& \,\, 1 \le s \le l+1  \}  \, .
\end{equation}
The word $v^* = v^*_1 v^*_2 ... v^*_n$ is processed into the sequence $v' = v'_1 v'_2 ... v'_n$ as follows.
For any $i= 1, ... , n-l$ the decoder calculate
$\nu'_i = {{\nu_{0}^{l, v^*_{i-l} ...  v^*_{i-1} }}} (\psi^{l,   v^*_{i-l} ...  v^*_{i-1} } (v^*_{i}  ... v^*_{i+l} ) )$
and 
\begin{equation}\label{mainsbc} 
v' _{i} = 
  \begin{cases} v^*_i  \quad 
   \text{if } \,  \nu'_i \ge \Delta
\\   \overline{v^*_i }
     \quad 
   \text{otherwise}   \, , 
\end{cases}
\end{equation}
and $v'_i = v^*_i  $ for $i= n-l+1$.

\begin{theorem}  Let there be a Bernoulli source with probability 0 equal to $p, p> 1/2$, generating a sequence of letters $x_1 ... x_n$, which is transmitted via BSC with an error probability of $\pi$.  If the $Code_{BSC}$  with parameter $l$ is used to correct errors, then 
the bit error rate can be estimated as follows:
\begin{equation}\label{t2} 
BER(Code_{BSC}) \le   \Bigg(  \Big(   (1-\pi)  \sum_{i=0}^{\Delta - 1} (^{l+1}_{\,\, i} ) \, p^i (1-p)^{l+1-i}  +
\pi \, \sum_{i=\Delta}^{l} (^{l+1}_{\,\, i} ) \, (1-p)^i  p^{l+1-i}  \,  \Big)\,   \, (1- l/n)  
\end{equation}
$$ + \pi \,  l/n \Bigg) \,\, (l+1)  \, . $$
\end{theorem}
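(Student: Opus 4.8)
The plan is to follow the same scheme as in the proof of the previous theorem: estimate the probability $\Pr(v'_i\neq v_i)$ for a single symbol, then pay a factor $l+1$ for error propagation under $\psi_{l,w}$. First I would observe that in the map $v'\mapsto x'=\psi_{l,w}(v')$ of (\ref{psi}) each input symbol $v'_k$ influences only the outputs $x'_k,x'_{k+1},\dots,x'_{k+l}$; hence $\{i:x'_i\neq x_i\}\subseteq\bigcup_{k:\,v'_k\neq v_k}\{k,k+1,\dots,k+l\}$, so $|\{i:x'_i\neq x_i\}|\le(l+1)\,|\{k:v'_k\neq v_k\}|$ and, after dividing by $n$ and taking expectations, it suffices to bound $\Pr(v'_i\neq v_i)$. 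By Theorem 1 the sequence $v=\phi_{l,w}(x)$ with $w$ uniform is the stationary two-faced chain $T_l$, so together with the i.i.d.\ channel noise this probability is the same for every interior index $i$, $1\le i\le n-l$, and I only have to treat one such $i$ and, separately, the $l$ boundary indices.

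For the boundary indices $i=n-l+1,\dots,n$ the decoder outputs $v'_i=v^*_i$, so $\Pr(v'_i\neq v_i)=\pi$; weighted by the fraction $l/n$ of such indices this yields the term $\pi\,l/n$ in (\ref{t2}). For an interior $i$, let $e_i\in\{0,1\}$ denote the channel error at position $i$, $\Pr(e_i=1)=\pi$. By the decoding rule (\ref{mainsbc}),
\[
\Pr(v'_i\neq v_i)=\Pr(e_i=0,\ \nu'_i<\Delta)+\Pr(e_i=1,\ \nu'_i\ge\Delta).
\]
The core of the argument is to determine the law of $\nu'_i$. Using the identity $\psi_{l,w}(\phi_{l,w}(x))=x$ and the definitions (\ref{fi})--(\ref{psi}) one checks the block identity $\psi_{l,\,v_{i-l}\dots v_{i-1}}(v_i\dots v_{i+l})=x_i\dots x_{i+l}$; hence, when the decoding window $v^*_{i-l}\dots v^*_{i+l}$ carries no channel error, $\nu'_i$ equals the number of zeros in the Bernoulli block $x_i\dots x_{i+l}$, i.e.\ $\nu'_i\sim\mathrm{Bin}(l+1,p)$. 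If moreover only $v^*_i$ is flipped, then, since flipping the first argument of $\psi$ on a block of length $l+1$ flips all $l+1$ of its output symbols (immediate from (\ref{psi})), $\nu'_i$ equals the number of ones in $x_i\dots x_{i+l}$, i.e.\ $\nu'_i\sim\mathrm{Bin}(l+1,1-p)$.

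Next I would note that $\Delta$ as defined in (\ref{del}) is exactly the Bayes-optimal threshold for the binary hypothesis test ``$v^*_i$ correct'' versus ``$v^*_i$ flipped'' with prior weights $1-\pi,\pi$ and the two binomial models above, so that $\{\nu'_i\ge\Delta\}$ is precisely the region on which (\ref{mainsbc}) keeps $v^*_i$. Substituting the binomial tail probabilities into the decomposition gives, for interior $i$,
\[
\Pr(v'_i\neq v_i)\le(1-\pi)\sum_{j=0}^{\Delta-1}\binom{l+1}{j}p^j(1-p)^{l+1-j}+\pi\sum_{j=\Delta}^{l}\binom{l+1}{j}(1-p)^j p^{l+1-j},
\]
and combining this with weight $1-l/n$, the boundary contribution $\pi\,l/n$ with weight $l/n$, and the overall factor $l+1$ produces (\ref{t2}). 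The step I expect to be the main obstacle is the treatment of decoding windows carrying two or more channel errors: there the block identity fails and $\nu'_i$ no longer follows the clean binomial model. I would handle this as in the asymptotic discussion following the previous theorem, arguing that such multi-error windows occur with probability of higher order in $\pi$ and that, since every decision $v'_i$ is in any case either $v^*_i$ or $\overline{v^*_i}$, their aggregate contribution is already absorbed in the right-hand side of (\ref{t2}) rather than adding to it.
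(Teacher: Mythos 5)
Your proposal follows essentially the same route as the paper: the paper's own proof is a two-sentence sketch stating exactly your decomposition, namely that the first term of (\ref{t2}) covers the case $v^*_i = v_i$ but $v'_i \ne v^*_i$ and the second term the case where a channel flip at position $i$ is not corrected, with the factor $l+1$ and the $\pi\, l/n$ boundary term handled as you do. Your write-up in fact supplies more detail than the paper (the binomial law of $\nu'_i$ via the block identity for $\psi$, the role of $\Delta$ as the Bayes threshold) and explicitly flags the multi-error-window issue that the paper's proof passes over in silence, so no substantive divergence or gap relative to the paper's argument.
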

\begin{proof}
Here, the first term corresponds to the case where the transmitted letter was not changed but was incorrectly decoded (i.e., $v_i^* = v_i$, but $v'_i \ne v_i^*$). The second term corresponds to the case where the transmitted letter was changed by the channel, but this error was not corrected.\end{proof}

Applying Hoeffding's inequality (Claim 1), we see that, in general, the asymptotic behavior of the error is close to that of EC.
In particular, the asymptotic behavior of $BER(Code_{BSC})$ at a low channel error probability $\pi$, i.e., $\pi \to 0$, and $l = 1/\pi^\gamma, \gamma \in (0,1)$.
$BER(Code_{BSC}) = \pi o(\pi)$. Thus, the frequency of uncorrected errors tends to 0.

\section{Conclusion}
We can see that encoding and decoding the proposed codes requires a constant number of operations per letter. In a sense, we can say that these methods reduce the number of errors “for free,” since the encoding time of the transmitted sequence and the decoding time of the received sequence are proportional to its length.  

A further development of this research could be the creation of a code that provides the maximum likelihood estimation of the decoded sequence or its simple approximation.

Shannon's famous theorem on information transmission over a noisy channel shows that the codes described can reduce the probability of error to zero only if the entropy of the source is less than the channel capacity (estimates of capacity for both channels can be found in \cite{co}).   If the entropy of the source exceeds the channel capacity, then, generally speaking, the original sequence can be “stretched” by applying known data compression decoders before using the described codes, but this possibility requires further research and, in particular, an assessment of complexity.

\end{document}